\pgfplotsset{compat=newest}
\newtheorem{thm}{Theorem}[section]
\newtheorem{lem}[thm]{Lemma}
\newtheorem{prop}[thm]{Proposition}
\theoremstyle{definition}
\newtheorem{eg}{Example}
\newtheorem{q}{Question}
\theoremstyle{remark}
\newcommand{\bP}{\mathbb P}
\newcommand{\bE}{\mathbb E}
\newcommand{\cW}{\mathcal W}
\newcommand{\Beta}{\operatorname{Beta}}
\numberwithin{equation}{section}
\begin{document}

\title{Discussion of  `A Gibbs sampler for a class of random convex polytopes'}


\author{Persi Diaconis}
\address{Department of Statistics, Stanford University, Stanford, CA, 94305}
\email{diaconis@math.stanford.edu}

\author{Guanyang Wang}
\address{Department of Statistics, Rutgers University, NJ 08854}
\email{guanyang.wang@rutgers.edu}


\maketitle
\pagestyle{empty}

\section{Introduction}
This very welcome paper by Jacob, Gong, Edlefsen and Dempster \cite{jacob2019gibbs} is a fascinating contribution to an important subject. Persistent questions about the lower and upper probabilities include:
\begin{enumerate}
	\item What does it do in simple problems?
	\item What does it do in real problems?
	\item What do the upper and lower probabilities mean?
	\item How can the needed computations be carried out?
\end{enumerate}

The paper offers answers to these questions by proposing a Gibbs sampler to perform statistical inference for categorical distributions using the Dempster-Shafer approach.  To be precise, let $\boldsymbol{x} = (x_i)_{i=1}^N$ be the observations, each comes from one of the $K$ categories. The model assumes that there exists a $\theta = (\theta_1, \cdots, \theta_K)$ in the $K-$simplex $\Delta :=\{(\theta_1, \theta_2, \cdots, \theta_K): \theta_i \geq 0 \text{~for every~} i, \sum_{i=1}^K \theta_i = 1\}$ such that $\bP(x_i = k) = \theta_k$ for every $i, k$. Moreover, it is  assumed that the observations $\boldsymbol{x}$ are generated from the following procedure.   
\begin{enumerate}
	\item For each $i \in \{1, \cdots, N\}$, sample $u_i$ uniformly from the simplex $\Delta$,
	\item Since $\theta$ partitions $\Delta$ into $K$ disjoint pieces, we set $x_i = k$ if $u_i \in \Delta_k(\theta)$, where $\Delta_k(\theta)$ is the ``subsimplex''
	which have the same vertices as $\Delta$ except for the $k$-th vertex replaced by $\theta$. 
\end{enumerate}

 Since this machinery generates $x_i$ through $u_i$, it is natural to ask what are the feasible points $\boldsymbol{u} \in  \{(u_1, u_2,\cdots u_n )\in \Delta^N\}$ that can generate the observed $\boldsymbol{x}$. For a fixed $\theta$, the set of feasible points $\mathcal R_{\boldsymbol{x}}(\theta)$ are easy to describe, as it is  the product of $N$ subsimplexes:
$$
\mathcal R_{\boldsymbol{x}}(\theta) = \prod_{i = 1}^N \Delta_{x_i}(\theta),
$$
and the whole feasible set is then 
$$\mathcal R_{\boldsymbol{x}} := \cup_{\theta\in \Delta}\mathcal R_{\boldsymbol{x}}(\theta).$$ 

From $u\in \mathcal R_{\boldsymbol{x}}$ one forms $\mathcal F_u := \{\theta\in \Delta: u_n \in \Delta_{x_n}(\theta) \text{~ for every~} n\}$ and assigns a lower probability to sets $\Sigma$ in the parameter space by the chance that $\mathcal F(u)$ is in $\Sigma$ (this last by Monte Carlo).

The paper's main contribution is a Gibbs sampler that samples from the uniform distribution of $R_{\boldsymbol{x}}$. The problem is challenging due to the complicated nature of  $R_{\boldsymbol{x}}$. In fact,  uniformly sampling from $\mathcal R_{\boldsymbol{x}}(\theta)$ is easy (using Algorithm 1 of \cite{jacob2019gibbs}), but uniform sampling from the union is much more difficult. The key observations in the paper are Proposition 3.1 and Proposition 3.2. They directly characterize the conditional distribution of $\boldsymbol{u}$ and make it possible for implementing the Gibbs sampler. We find the algorithm elegant and insightful.

\section{The math problem}

The new algorithm is interesting as a mathematics problem. The authors have translated it into a clever probability problem in the case of two categories. This gives the Markov chain:
\begin{align}\label{eqn:evolve}
Z^{(t)} = B_1^{(t)} (1 - B_2^{(t)}) Z^{(t-1)} + B_2^{t},
\end{align}
where $B_1^{t} \sim \Beta(N_1, 1)$ and $B_2^{t} \sim \Beta(1, N_2)$ are independent Beta random variables and $N_1, N_2$ are two fixed positive integers.

The chain falls into the well-studied area of iterated random functions  \cite{chamayou1991explicit}\cite{diaconis1999iterated}. It has been previously studied in \cite{letac2002donkey} who offer higher dimensional versions which might be relevant to the present paper when $k\geq 3$.

In the $k= 2$ case, let $P$ be the Markov transition kernel of the chain  and $\pi$ be its stationary distribution. The authors use coupling techniques to derive the following convergence  bound in terms of the Wasserstein-1 distance 
\begin{align}\label{eqn: was upper bound}
\cW_1 (P^t(z, \cdot), \pi) \leq \bigg(\frac{N_1}{N_1 + 1} \cdot \frac{N_2}{N_2 + 1}\bigg)^t \cdot W_1 (\delta_{z}, \pi)
\end{align}
for every initialization $z$, where $\delta_z$ is the delta-mass at  $z$. As a small contribution to the conversation, we derive a lower bound of the convergence speed. Our results suggest $\frac{N_1}{N_1 + 1} \cdot \frac{N_2}{N_2 + 1}$ is the exact convergence rate for the Markov chain. To start with, we prove the stationary distribution of the chain is another Beta distribution (this result can also be derived directly from the original formulation of the problem, see Appendix C of \cite{jacob2019gibbs}). The following lemma is well known and helpful: 
\begin{lem}\label{lem: beta distribution}
	Let $X\sim \Beta(a,b)$, $Y\sim \Beta(a+b, c)$ be independent random variables with $a,b,c >0$. Then $XY\sim \Beta(a, b+c)$.
\end{lem}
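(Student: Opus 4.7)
The plan is to prove the lemma via the gamma representation of beta distributions, which sidesteps any density-level computation. Let $G_1 \sim \Gamma(a)$, $G_2 \sim \Gamma(b)$, $G_3 \sim \Gamma(c)$ be independent (unit rate) gamma variables. I will use two classical facts: (i) for independent gammas $G \sim \Gamma(\alpha)$ and $H \sim \Gamma(\beta)$, the ratio $G/(G+H) \sim \Beta(\alpha,\beta)$ is independent of the sum $G+H \sim \Gamma(\alpha+\beta)$; and (ii) functions of disjoint subsets of independent variables are independent.

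I would then set $\tilde X := G_1/(G_1+G_2)$ and $\tilde Y := (G_1+G_2)/(G_1+G_2+G_3)$. Fact (i) immediately gives $\tilde X \sim \Beta(a,b)$, and applied to the independent pair $S := G_1+G_2 \sim \Gamma(a+b)$ and $G_3 \sim \Gamma(c)$ it gives $\tilde Y \sim \Beta(a+b,c)$. The independence of $\tilde X$ and $\tilde Y$ is the only step that deserves a moment of care: by (i), $\tilde X \perp S$, and $\tilde X \perp G_3$ by construction, so $\tilde X$ is independent of the pair $(S, G_3)$ and hence of the function $\tilde Y$. This shows $(\tilde X, \tilde Y) \stackrel{d}{=} (X, Y)$.

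The conclusion follows by computing the product $\tilde X \tilde Y = G_1/(G_1+G_2+G_3)$ and applying (i) one last time to the independent gammas $G_1 \sim \Gamma(a)$ and $G_2+G_3 \sim \Gamma(b+c)$, which yields $\tilde X \tilde Y \sim \Beta(a, b+c)$. Since $(X,Y) \stackrel{d}{=} (\tilde X, \tilde Y)$, the same holds for $XY$, proving the lemma.

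There is no real obstacle; the most delicate point is the independence argument for $\tilde X$ and $\tilde Y$, which is why I would spell it out. A purely computational alternative would be the Jacobian change of variables $(X,Y) \mapsto (XY, Y)$ followed by integrating out $Y$, after which the factorial identity $B(a,b) B(a+b,c) = B(a,b+c) B(b,c)$ (immediate from $B(\alpha,\beta)=\Gamma(\alpha)\Gamma(\beta)/\Gamma(\alpha+\beta)$) makes the normalizing constants collapse. I prefer the gamma route since it explains why the identity is true rather than merely confirming it, and it also makes the natural higher-dimensional generalization (the Dirichlet analogue) transparent.
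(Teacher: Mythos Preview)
Your proof is correct and takes a genuinely different route from the paper. The paper proves the lemma by the method of moments: it computes $\bE(XY)^k$ using the closed-form $k$-th moment of a beta and checks that the gamma factors telescope to the $k$-th moment of $\Beta(a,b+c)$; boundedness on $[0,1]$ then forces equality in law. Your argument instead builds a probabilistic coupling via independent gammas $G_1,G_2,G_3$ so that the product $\tilde X\tilde Y$ collapses algebraically to $G_1/(G_1+G_2+G_3)$, from which the conclusion is immediate. The moment proof is shorter and entirely mechanical; your gamma construction is more structural, explains the identity rather than verifying it, and, as you note, points directly to the Dirichlet generalization.

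One small point worth tightening: from $\tilde X\perp S$ and $\tilde X\perp G_3$ alone one cannot in general conclude $\tilde X\perp (S,G_3)$, since pairwise independence does not imply joint independence. The clean justification is that $(\tilde X,S)$ is a function of $(G_1,G_2)$ and hence independent of $G_3$, and combining this with $\tilde X\perp S$ gives that $\tilde X,S,G_3$ are mutually independent; then $\tilde X$ is independent of any function of $(S,G_3)$, in particular $\tilde Y$. You clearly have the right picture, so this is a matter of phrasing rather than a gap.
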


\begin{proof}
	We show the $k$-th moment of $XY$ equals the $k$-th moment of a $\Beta(a, b+c)$ random variable for every $k$.  Since both $XY$ and beta distributions are bounded by $[0,1]$, standard results in probability show that the distribution is characterized by all its moments. 
	
	The $k$-th moment for a $\Beta(a_1, a_2)$ random variable is $\frac{\Gamma(a_1 + k)}{\Gamma(a_1)} \frac{\Gamma(a_1 + a_2)}{\Gamma(a_1 + a_2 + k)}$, therefore,
	\begin{align*}
	\bE(XY)^k &= \frac{\Gamma(a + k)}{\Gamma(a)} \frac{\Gamma(a + b)}{\Gamma(a + b + k)} \times \frac{\Gamma(a + b+ k)}{\Gamma(a + b)} \frac{\Gamma(a + b + c)}{\Gamma(a + b + c + k)} \\
	& =  \frac{\Gamma(a + k)}{\Gamma(a)}  \frac{\Gamma(a + b + c)}{\Gamma(a + b + c + k)},
	\end{align*}
	which is the same as the $k$-th moment of $\Beta(a+b,c)$ distribution, as desired. 
\end{proof}

Now we are ready to show $\Beta(N_1+1, N_2)$ is stable under the transformation $Z \rightarrow B_1 (1 - B_2) Z + B_2$, and is thus the stationary distribution of the chain $\{Z^{(0)}, Z^{(1)}, \cdots, Z^{(n)}\}$. 
\begin{prop}
	Let $Z\sim \Beta(N_1 + 1, N_2), B_1 \sim \Beta(N_1, 1), B_2 \sim \Beta(1, N_2)$ be independent random variables, then $B_1 (1 - B_2) Z + B_2 \sim \Beta(N_1 + 1, N_2)$.
\end{prop}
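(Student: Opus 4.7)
The plan is to reduce the claim to two applications of Lemma \ref{lem: beta distribution} by passing to the complementary variable. First I would observe the simple algebraic identity
\[
B_1 (1 - B_2) Z + B_2 \;=\; 1 - (1 - B_2)(1 - B_1 Z),
\]
which is verified by expanding the right-hand side. Since a random variable is $\Beta(N_1+1, N_2)$ if and only if its complement is $\Beta(N_2, N_1+1)$, it suffices to show that $(1 - B_2)(1 - B_1 Z) \sim \Beta(N_2, N_1+1)$.

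Next I would apply Lemma \ref{lem: beta distribution} twice. In the first application, take $X = B_1 \sim \Beta(N_1, 1)$ and $Y = Z \sim \Beta(N_1+1, N_2)$, which are independent; with parameters $a = N_1$, $b = 1$, $c = N_2$ the lemma gives $B_1 Z \sim \Beta(N_1, N_2+1)$, and therefore $1 - B_1 Z \sim \Beta(N_2+1, N_1)$. In the second application, take $X = 1 - B_2 \sim \Beta(N_2, 1)$ and $Y = 1 - B_1 Z \sim \Beta(N_2+1, N_1)$, now with parameters $a = N_2$, $b = 1$, $c = N_1$, which produces $(1 - B_2)(1 - B_1 Z) \sim \Beta(N_2, N_1+1)$. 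Complementing once more delivers the desired distribution.

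The only nontrivial thing to check is that the independence hypothesis of Lemma \ref{lem: beta distribution} holds at each step. This is immediate from the assumed independence of $B_1, B_2, Z$: $B_1$ and $Z$ are independent, and in the second step $1 - B_2$ is independent of the pair $(B_1, Z)$, hence of $1 - B_1 Z$. So the main effort is simply selecting the correct algebraic rewrite that puts the expression into the product form to which Lemma \ref{lem: beta distribution} applies; there is no real obstacle beyond spotting that identity.
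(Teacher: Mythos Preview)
Your proof is correct and follows essentially the same route as the paper: rewrite $1 - Y = (1-B_2)(1-B_1 Z)$, then apply Lemma~\ref{lem: beta distribution} first to $B_1 Z$ and then to $(1-B_2)(1-B_1 Z)$, complementing at the end. Your explicit check of the independence hypotheses is a small but welcome addition over the paper's version.
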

\begin{proof}
	Let $Y := B_1 (1 - B_2) Z + B_2$, then  $1 - Y = (1-B_2) (1 - B_1Z)$. Lemma \ref{lem: beta distribution} shows $B_1 Z$ is a $\Beta(N_1, N_2 + 1)$ random variable, which in turn shows $1 - B_1 Z \sim \Beta(N_2+1, N_1)$. Since $1 - B_2 \sim \Beta(N_2, 1)$, apply Lemma \ref{lem: beta distribution} again yields $1- Y \sim \Beta(N_2, N_1 + 1)$. Therefore $Y$ is a $\Beta(N_1 + 1, N_2)$ random variable. 
\end{proof}

Now we turn to lower bound the convergence rate of the Markov chain. Recall that for any two probability measure $\mu, \nu$ with bounded support , the Kantorovich- Rubinstein dual theorem  shows:
\[
\cW_1(\mu, \nu)  = \sup_{\{f: \text{Lip}(f)\leq 1\}} \int f (d\mu - d\nu)
\]
Choosing $f(x)  = \pm x$ immediately implies $\cW_1(\mu, \nu) \geq \lvert m_1(\mu) - m_1(\nu)\rvert$, where $m_1$ stands for the first moment. Let $\nu = \pi$ and $Z^{(t)}(z)$ be the Markov chain at time $t$ which starts at $z$. It is clear that $m_1(\nu) = \frac{N_1 + 1}{N_1 + N_2 + 1}$, and the evolution equation \ref{eqn:evolve} yields
\begin{align}\label{eqn:recursion}
\bE(Z^t(z)) = \frac{N_1 N_2}{(N_1 + 1)(N_2 + 1)}  \bE(Z^{t-1}(z)) + \frac{1}{N_2 + 1}
\end{align}
Using the initial condition $\bE(Z^0(z)) = z $, recursion \ref{eqn:recursion} can be solved as:
\[
\bE(Z^t(z))  = \bigg( \frac{N_1 N_2}{(N_1 + 1)(N_2 + 1)}\bigg)^t \bigg(z - \frac{N_1 + 1}{N_1 + N_2 + 1}\bigg) + \frac{N_1 + 1}{N_1 + N_2 + 1}.
\] 
Thus we immediately have
\begin{align}\label{eqn: was lower bound}
\cW_1(P^t(z,\cdot),\pi) \geq \bigg\lvert z - \frac{N_1 + 1}{N_1 + N_2 + 1}\bigg \rvert \bigg(\frac{N_1 + 1}{N_1 + N_2 + 1}\bigg)^t.
\end{align}
Therefore, unless the chain is not initialized at exactly $\frac{N_1 + 1}{N_1 + N_2 + 1}$, the exact convergence rate of the Markov chain equals $\frac{N_1 + 1}{N_1 + N_2 + 1}$.
Combining \ref{eqn: was upper bound} and \ref{eqn: was lower bound}, we have the following:
\begin{prop}
	\begin{align}\label{eqn: was convergence rate}
  \lvert\bE Z - z \rvert	\leq \cW_1 (P^t(z, \cdot), \pi)\bigg/ \bigg(\frac{N_1}{N_1 + 1} \cdot \frac{N_2}{N_2 + 1}\bigg)^t \leq \bE\lvert Z - z \rvert,
	\end{align}
	where $Z\sim \Beta(N_1 + 1, N_2)$.
\end{prop}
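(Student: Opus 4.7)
The plan is to read the proposition off from the two bounds \eqref{eqn: was upper bound} and \eqref{eqn: was lower bound} already in hand, after making two small identifications. First, I would compute the mean of $\pi = \Beta(N_1+1, N_2)$: $\bE Z = \frac{N_1+1}{N_1+N_2+1}$, which is exactly the stationary mean $m_1(\pi)$ appearing in the derivation of \eqref{eqn: was lower bound}. Second, I would invoke the elementary identity $\cW_1(\delta_z, \pi) = \bE_{Z \sim \pi}\lvert Z - z\rvert$, which holds because the only coupling of a point mass $\delta_z$ with $\pi$ is $(z, Z)$ for $Z \sim \pi$, whose transport cost is $\bE\lvert Z - z\rvert$.

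With these two identifications in hand, the left inequality follows by dividing \eqref{eqn: was lower bound} through by $\bigl(\frac{N_1}{N_1+1}\cdot\frac{N_2}{N_2+1}\bigr)^t$: the prefactor $\lvert z - \frac{N_1+1}{N_1+N_2+1}\rvert$ becomes $\lvert \bE Z - z\rvert$. Symmetrically, the right inequality follows by substituting $\cW_1(\delta_z, \pi) = \bE\lvert Z - z\rvert$ into \eqref{eqn: was upper bound} and dividing by the same contraction factor.

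The main obstacle is essentially none: all of the heavy lifting --- the coupling contraction yielding \eqref{eqn: was upper bound}, the mean recursion \eqref{eqn:recursion} producing the first-moment lower bound, and the Kantorovich--Rubinstein inequality $\cW_1(\mu,\nu) \geq \lvert m_1(\mu) - m_1(\nu)\rvert$ --- has already been carried out in the preceding paragraphs. The proposition is then a one-line repackaging of these facts. The only minor piece of bookkeeping I would double-check is that the contraction rate implicit in \eqref{eqn: was lower bound} matches the rate in \eqref{eqn: was upper bound}; both equal $\frac{N_1 N_2}{(N_1+1)(N_2+1)} = \frac{N_1}{N_1+1}\cdot\frac{N_2}{N_2+1}$, which is immediate from the explicit solution of the recursion \eqref{eqn:recursion}.
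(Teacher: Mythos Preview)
Your proposal is correct and matches the paper's own treatment: the paper does not give a separate proof of this proposition but simply states it as the result of ``combining \eqref{eqn: was upper bound} and \eqref{eqn: was lower bound}'', and your write-up supplies exactly the two identifications (that $\bE Z = \tfrac{N_1+1}{N_1+N_2+1}$ and that $\cW_1(\delta_z,\pi)=\bE\lvert Z-z\rvert$) needed to make that combination explicit. Your remark that the contraction rate coming out of the recursion \eqref{eqn:recursion} is $\tfrac{N_1 N_2}{(N_1+1)(N_2+1)}$ is also correct --- the exponent printed in \eqref{eqn: was lower bound} is a typo, and the proposition only follows with the rate you state.
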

When $z = 0$ or $1$, Formula \ref{eqn: was convergence rate} shows the convergence speed can be calculately exactly as the lower bound matches the upper bound perfectly. Moreover, the function $u: z \rightarrow \bE|Z - z| = \int_{0}^z \mathbb P (Z \leq t) dt + \int_{z}^1 \mathbb P (Z \geq t) dt$ has derivative $u'(z) = \mathbb P(Z \leq z) -\mathbb P(Z \geq z)$ almost everywhere. Therefore $u$ is first decreasing and then increasing on the unit interval, thus $\max_{z\in[0,1]} u(z) = \max\{u(0), u(1)\}$. This gives us the exact convergence speed for the chain under the worst-case sceneario:  
\begin{prop}
\[\sup_{z\in[0,1]}\cW_1 (P^t(z, \cdot), \pi) = \frac{\max\{N_1+1, N_2\}}{N_1 + N_2 + 1} \bigg(\frac{N_1 N_2 }{(N_1 +1)(N_2 + 1)}\bigg)^t.\]
\end{prop}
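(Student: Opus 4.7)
The plan is to exploit the sandwich bound established in the previous proposition,
\[
|\bE Z - z| \;\le\; \cW_1(P^t(z,\cdot),\pi)\bigg/r^t \;\le\; \bE|Z-z|, \qquad r := \frac{N_1 N_2}{(N_1+1)(N_2+1)},
\]
and to observe that the upper and lower bounds collapse to a single value at the two boundary points $z=0$ and $z=1$. Since $Z\sim\Beta(N_1+1,N_2)$ is supported on $[0,1]$, we have $\bE|Z-0|=\bE Z$ and $\bE|Z-1|=1-\bE Z$, which equal $|\bE Z - 0|$ and $|\bE Z - 1|$ respectively. Thus at these two points the sandwich pins down the Wasserstein distance exactly:
\[
\cW_1(P^t(0,\cdot),\pi) = \frac{N_1+1}{N_1+N_2+1}\,r^t, \qquad
\cW_1(P^t(1,\cdot),\pi) = \frac{N_2}{N_1+N_2+1}\,r^t.
\]

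Next I would establish the matching upper bound on the supremum over all $z\in[0,1]$. The upper half of the sandwich gives $\cW_1(P^t(z,\cdot),\pi)\le u(z)\,r^t$ with $u(z):=\bE|Z-z|$. The excerpt already notes that $u'(z)=\Pr(Z\le z)-\Pr(Z\ge z)$ a.e., so $u$ is nondecreasing past the median and nonincreasing before it, whence $\max_{z\in[0,1]}u(z)=\max\{u(0),u(1)\}=\frac{\max\{N_1+1,N_2\}}{N_1+N_2+1}$. Therefore
\[
\sup_{z\in[0,1]}\cW_1(P^t(z,\cdot),\pi) \;\le\; \frac{\max\{N_1+1,N_2\}}{N_1+N_2+1}\,r^t.
\]

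Finally, the reverse inequality is automatic: since the supremum is taken over all $z\in[0,1]$, it is at least the values computed at $z=0$ and $z=1$, namely $\frac{N_1+1}{N_1+N_2+1}r^t$ and $\frac{N_2}{N_1+N_2+1}r^t$, whose maximum is exactly $\frac{\max\{N_1+1,N_2\}}{N_1+N_2+1}r^t$. Combining the two inequalities yields the claimed equality.

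The only non-routine point is verifying that the sandwich is tight at $z\in\{0,1\}$; this just requires the observation that $Z$ is supported in $[0,1]$ so $|Z-0|=Z$ and $|Z-1|=1-Z$ almost surely, allowing the outer absolute value and the expectation to be exchanged. Everything else is direct substitution and the monotonicity argument already sketched in the text.
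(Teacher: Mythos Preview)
Your proof is correct and follows essentially the same route as the paper: use the sandwich inequality from the previous proposition, note that $u(z)=\bE|Z-z|$ is maximized at an endpoint so the upper bound is $\max\{u(0),u(1)\}\,r^t$, and observe that at $z\in\{0,1\}$ the lower and upper bounds coincide because $|\bE Z - z|=\bE|Z-z|$, making the supremum exactly $\frac{\max\{N_1+1,N_2\}}{N_1+N_2+1}\,r^t$. Your write-up merely spells out the endpoint values more explicitly than the paper does.
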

\begin{proof}
	Since $\bE|Z - z|$ is maximized at either $z = 1$ or $z= 0$, the RHS of Formula $\ref{eqn: was convergence rate}$ is upper bounded by  $\frac{\max\{N_1+1, N_2\}}{N_1 + N_2 + 1}$. Meanwhile, no matter $z = 1$ or $z= 0$, we know $|\bE Z- z| = \bE|Z - z|$, and thus we conclude
	\[\sup_{z\in[0,1]}\cW_1 (P^t(z, \cdot), \pi) = \frac{\max\{N_1+1, N_2\}}{N_1 + N_2 + 1} \bigg(\frac{N_1 N_2 }{(N_1 +1)(N_2 + 1)}\bigg)^t.\]
\end{proof}

\begin{q}
Can the authors see if their general algorithm can be translated into a vector-valued version of \ref{eqn:evolve} for $k$ greater than or equal to $3$?
\end{q}

\section{Hypothesis Testing Problems}
The introduction to the paper emphasizes problems where the sample size $N$ is small compared to the number of categories. We have encountered such problems in our recent work and we find straightforward Bayesian and frequentist solutions.

\begin{eg}
 In studying the popular `wash shuffle' where a deck of $52$ cards is `smushed around the table for $t$ seconds' (say $t = 60$) one wants to test if the cards are well mixed. Here the data consists of $100$ permutations of $X_1, X_2, \cdots X_{100}$ in $S_{52}$. The number of categories $k = \lvert S_{52}\rvert \approx 8.6 \times 10^{68}$ is huge. One has some partial prior information. For example, if the cards are not mixed, it may be because several cards original together are still together. Or the original top (or bottom) card may still be close to the top (or bottom). In our analysis, we found statistics $T_1, T_2, \cdots T_l$ (e.g., $T_1 =$ the number of adjacent pairs, $T_2 =$ position of the original top card, $T_3 = $ the distance from the starting distribution in some natural metric, $\dots$).
 
 The distribution of these features can be obtained under the null distribution. This allows standard frequentist tests. In our work \cite{diaconis2018bayesian}, we provided a Bayesian solution to this problem. We considered an exponential family through the statistics. For $\sigma \in S_{52}$:
 \[
 \bP_\theta(\sigma) = Z^{-1}(\sigma) \exp(\sum_{i=1}^l \theta_i T_i(\sigma)),
 \]
 where $Z(\sigma) = \sum\limits_{\sigma\in S_{52}} \exp(\sum_{i=1}^l \theta_i T_i(\sigma))$ is the normalization constant.  Thus $\theta = 0$ is the uniform distribution. The exchange algorithm \cite{murray2006mcmc} allowed us to compute the posterior in a reasoning fashion. 
\end{eg} 

\begin{q}
	Do the authors think there will be a time for their algorithm can be run for such $N$ and $k$? 
\end{q}

\section{Remark}
As clearly explained in the paper, the uniform distribution on the $k$-simplex underlies the basic procedure. It is well to be reminded that it is hard to understand the properties when $k$ is large. The following two examples are drawn from \cite{diaconis2002bayesian}.
\begin{eg}[Bayesian Birthday Problem]
	Consider $N$ balls dropped into $k$ boxes, with probability Multinomial$(N,\boldsymbol{\theta})$. What is the chance that all the balls are dropped into distinct boxes?
	
	If $k = 365$, this is the classical birthday problem. The classical frequentist answer takes $\boldsymbol{\theta} = (\frac1{365}, \cdots, \frac 1{365})$ and shows that the chance is approximately $0.5$ when $N = 23$.  A `flat prior Bayesian' puts a uniform prior on the $k$-simplex. Then the chance of all balls in distinct boxes  is approximately $0.5$ when $N = 16$. Here, being a Bayesian does not change things much.
\end{eg}

\begin{eg}[Bayesian Coupon Collectors Problem]
With notation as above, consider the question: How large does $N$ have to be so that the chance that all boxes are covered is close to $0.5$?

The frequentist solution assumes $\boldsymbol{\theta} = (\frac1{365}, \cdots, \frac 1{365})$ and then shows $N$ should be approximately $2287$. Using a uniform prior for $\boldsymbol{\theta}$, one finds $N$ has to be $191844$. Here, the uniform prior makes a huge difference. 
\end{eg}

\begin{q}
	Is the uniform distribution on the simplex a crucial part of the procedure or could this be varied?
\end{q}

\section{A Bit of History}
When P.D. was a beginning graduate student at Harvard (1971), Art Dempster called him in to suggest a possible thesis topic: `Find ways to do the computations required for upper and lower probabilities'. This did not work out at the time but triggered a lifetime's interest. It is inspiring to have tracked his efforts over a $50$ year period. The present paper is important progress. We look forward to progress on the problems mentioned in our introduction.

\bibliographystyle{alpha}
\bibliography{refs}{}

\begin{thebibliography}{JGED19}

\bibitem[CL91]{chamayou1991explicit}
Jean-Fran{\c{c}}ois Chamayou and G{\'e}rard Letac.
\newblock Explicit stationary distributions for compositions of random
  functions and products of random matrices.
\newblock {\em Journal of Theoretical Probability}, 4(1):3--36, 1991.

\bibitem[DF99]{diaconis1999iterated}
Persi Diaconis and David Freedman.
\newblock Iterated random functions.
\newblock {\em SIAM review}, 41(1):45--76, 1999.

\bibitem[DH02]{diaconis2002bayesian}
Persi Diaconis and Susan Holmes.
\newblock A bayesian peek into feller volume i.
\newblock {\em Sankhy{\=a}: The Indian Journal of Statistics, Series A}, pages
  820--841, 2002.

\bibitem[DW18]{diaconis2018bayesian}
Persi Diaconis and Guanyang Wang.
\newblock Bayesian goodness of fit tests: a conversation for david mumford.
\newblock {\em Annals of Mathematical Sciences and Applications},
  3(1):287--308, 2018.

\bibitem[JGED19]{jacob2019gibbs}
Pierre~E Jacob, Ruobin Gong, Paul~T Edlefsen, and Arthur~P. Dempster.
\newblock A {Gibbs} sampler for a class of random convex polytopes.
\newblock {\em arXiv preprint arXiv:1910.11953}, 2019.

\bibitem[Let02]{letac2002donkey}
G{\'e}rard Letac.
\newblock Donkey walk and dirichlet distributions.
\newblock {\em Statistics \& probability letters}, 57(1):17--22, 2002.

\bibitem[MGM06]{murray2006mcmc}
Iain Murray, Zoubin Ghahramani, and David~JC MacKay.
\newblock {MCMC} for doubly-intractable distributions.
\newblock In {\em Proceedings of the Twenty-Second Conference on Uncertainty in
  Artificial Intelligence}, pages 359--366, 2006.

\end{thebibliography}
\end{document}